\documentclass{article}
\usepackage{spconf,amsmath,graphicx,hyperref}

\usepackage{setspace}
\usepackage{amsfonts}
\usepackage{epsfig}
\usepackage{subcaption}
\usepackage{mwe}
\usepackage{xcolor}
\usepackage{algorithm}
\usepackage{qcircuit}
\usepackage{braket}
\usepackage{eqnarray}

\usepackage{cite}
\usepackage{amssymb,amsthm}
\usepackage{graphicx}
\usepackage{textcomp}
\usepackage{booktabs}

\usepackage{bm}

\usepackage{array}
\usepackage{algpseudocode}

\title{A Novel Automatic Framework for Speaker Drift Detection in Synthesized Speech}
\name{%
\small $^{1}$Jia-Hong Huang, $^{2}$Seulgi Kim, $^{2}$Yi Chieh Liu, $^{1}$Yixian Shen, $^{1}$Hongyi Zhu, $^{3}$Prayag Tiwari, $^{1}$Stevan Rudinac, $^{1}$Evangelos Kanoulas}
\address{\small $^{1}$University of Amsterdam, $^{2}$Georgia Institute of Technology, $^{3}$Halmstad University}
\begin{document}

\theoremstyle{plain}
\newtheorem{theorem}{Theorem}
\newtheorem{corollary}{Corollary}
\newtheorem{proposition}{Proposition}
\newtheorem{lemma}{Lemma}
\theoremstyle{remark}
\newtheorem*{remark}{Interpretation}

%
\maketitle
\begin{abstract}
Recent diffusion-based text-to-speech (TTS) models achieve high naturalness and expressiveness, yet often suffer from speaker drift, a subtle, gradual shift in perceived speaker identity within a single utterance. This underexplored phenomenon undermines the coherence of synthetic speech, especially in long-form or interactive settings. We introduce the first automatic framework for detecting speaker drift by formulating it as a binary classification task over utterance-level speaker consistency. Our method computes cosine similarity across overlapping segments of synthesized speech and prompts large language models (LLMs) with structured representations to assess drift. We provide theoretical guarantees for cosine-based drift detection and demonstrate that speaker embeddings exhibit meaningful geometric clustering on the unit sphere. To support evaluation, we construct a high-quality synthetic benchmark with human-validated speaker drift annotations. Experiments with multiple state-of-the-art LLMs confirm the viability of this embedding-to-reasoning pipeline. Our work establishes speaker drift as a standalone research problem and bridges geometric signal analysis with LLM-based perceptual reasoning in modern TTS. 
\end{abstract}

\begin{keywords}
Text-to-speech (TTS), Diffusion Model, Speaker Drift Detection
\end{keywords}

\section{\textbf{Introduction}}
Recent progress in text-to-speech (TTS) synthesis, particularly with diffusion-based models, has significantly enhanced the naturalness, expressiveness, and controllability of generated speech \cite{wang2017tacotron, van2016wavenet,meng2024autoregressive,le2023voicebox,shen2023naturalspeech,shen2018natural,huang2024novel,ren2019fastspeech,jeong2021diff,popov2021grad,chen2023schrodinger,huang2025image2text2image,zhang2024towards,lovelace2024sample,lajszczak2024base,huang2025gradient,mehta2024matcha,liu2025e1,he2025continuous}. These models can synthesize long-form utterances with high perceptual fidelity, supporting applications such as personalized virtual assistants, audiobook narration, multi-turn dialog systems, and multimedia systems \cite{huang2023query,zhu2024enhancing,zhu2025interactive,huang2024multi,wang2025reasoning,shen2025macp,liu2018synthesizing,huang2020query,shen2025ssh,huang2021gpt2mvs,di2022dawn,zhang2024beyond,huang2022causal}. However, an underexplored yet critical challenge persists: speaker drift. This phenomenon refers to a subtle, gradual change in the perceived speaker identity within a single utterance, even when the synthesis is conditioned on a single, fixed speaker embedding or prompt. Such intra-utterance inconsistencies can undermine the effectiveness of applications like those above, where maintaining a coherent and stable speaker identity is crucial for a seamless user experience. 

Speaker drift fundamentally differs from conventional speaker changes addressed in diarization or speaker change detection, which typically assume abrupt and discrete transitions \cite{ajmera2004robust,zhang2019fully,xia2022turn}. In contrast, speaker drift involves gradual, often imperceptible shifts in vocal characteristics that accumulate throughout an utterance. This subtle degradation presents significant challenges for detection, quantification, and evaluation. A comparison of related tasks is summarized in Table~\ref{tab:related-tasks}. Currently, there is no standardized evaluation protocol, scalable automated method, or standardized dataset tailored specifically to this problem, leaving a critical gap in quality assurance, model validation, and deployment reliability for both academic research and production-level real-world TTS systems.

To address this gap, we propose a novel, LLM-driven framework for automatic speaker drift detection in synthetic speech. We formulate this as a binary classification task at the utterance level. Specifically, we extract speaker embeddings from short, overlapping segments of each utterance and compute pairwise cosine similarity scores, a compact, interpretable proxy for vocal identity consistency over time. These structured similarity score sequences with specially designed prompts are then fed into state-of-the-art LLMs, e.g., \cite{openai2024gpt4technicalreport,geminiteam2024geminifamilyhighlycapable,anthropic2025claude,deepseekai2025deepseekr1incentivizingreasoningcapability,bai2023qwentechnicalreport}, to assess whether speaker drift is present based solely on the numerical input. This design bypasses the token limitations of modern LLMs, which cannot directly process high-dimensional embeddings. It offers a reference-free, scalable approach that bridges speaker embedding analysis and the reasoning capabilities of LLMs.
 
To support this architecture, we provide theoretical justification for using cosine similarity as an indicator of drift. Under mild distributional assumptions, we prove that a threshold-based classifier using segment-level cosine similarity scores can detect speaker drift with exponentially decreasing error as the similarity gap between same-speaker and different-speaker segments increases. This result formally grounds the use of cosine similarity as a statistically meaningful signal for vocal identity transitions. 

To validate our proposed approach and enable systematic evaluation, we address the lack of real-world data by constructing a controlled benchmark dataset using a high-fidelity, diffusion-based TTS model. Real instances of intra-utterance identity drift are rare, ambiguous, and costly to annotate, making them unsuitable for empirical and systematic study. Instead, we generate utterance samples that either maintain consistent speaker identity or introduce identity shifts within an utterance, then verify them through human annotations. These synthetic samples provide a reliable and reproducible testbed for studying the drift phenomenon and evaluating the proposed framework in a well-defined setting.

Together, our contributions offer a principled pipeline for detecting speaker drift and establish a foundation for future work at the intersection of embedding-based speech analysis and LLM-based perceptual evaluation in modern TTS systems. To our knowledge, this is the first work to (1) define and formalize speaker drift detection as a standalone task, (2) construct a dataset tailored for intra-utterance identity variation, and (3) explore LLM-based reasoning as a diagnostic tool for speaker consistency in TTS pipelines.

\begin{figure*}[t!]
\centering
\scalebox{0.5}{
\centerline{\includegraphics{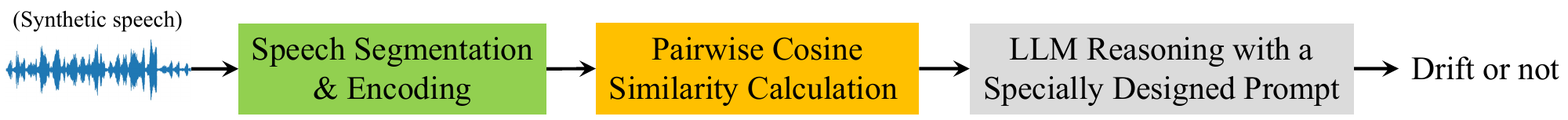}}}
\vspace{-0.3cm}
\caption{Our proposed LLM-based framework for detecting speaker drift in synthesized speech. Further details are provided in Section IV.}
\label{fig1}
\end{figure*}

\begin{table*}[ht]
\centering
\scalebox{1.0}{
\begin{tabular}{|p{4.5cm}|p{4.5cm}|p{4.8cm}|}
\hline
\multicolumn{1}{|c|}{\textbf{Task}} & 
\multicolumn{1}{c|}{\textbf{Goal}} & 
\multicolumn{1}{c|}{\textbf{Key Characteristics}} \\
\hline
Speaker Change Detection & Identify speaker boundaries in natural multi-speaker speech streams & Assumes abrupt, discrete speaker transitions; aims at accurate change-point detection \\
\hline
Speaker Verification or Identification & Confirm or classify speaker identity using labeled reference data & Requires ground-truth speaker labels; evaluates similarity across separate utterances \\
\hline
Speaker Diarization & Segment and cluster speech by speaker in multi-speaker audio & Operates on known or unknown speakers; assumes clear speaker turns; insensitive to subtle intra-speaker drift \\
\hline
Voice Cloning Consistency Evaluation & Evaluate preservation of speaker identity in synthetic speech & Primarily relies on subjective human assessments or limited ABX testing; lacks automated, fine-grained detection methods \\
\hline
Out-of-Distribution Detection (Speaker Embeddings) & Detect anomalous speaker embeddings outside known distribution & Depends on large reference datasets; unsuitable for fine-grained temporal consistency within utterances \\
\hline
Voice Style Transfer Consistency & Assess preservation of non-identity attributes such as prosody, emotion, or accent & Focuses on style or affective features; does not explicitly target speaker identity stability \\
\hline
\textbf{Speaker Drift Detection (Ours)} & \textbf{Detect subtle, gradual variations in speaker identity within a single utterance, particularly in synthetic speech from TTS systems} & \textbf{Reference-free; focuses on temporal embedding consistency within utterances; addresses identity stability challenges unique to TTS} \\
\hline
\end{tabular}}
\vspace{-0.2 cm}
\caption{Comparison of speaker drift detection with related tasks in speaker and TTS research.}
\label{tab:related-tasks}
\end{table*}

\section{\textbf{Dataset Construction}}

\noindent{\textbf{2.1 Overview}}

To systematically study the speaker drift phenomenon in synthetic speech, we construct a benchmark dataset designed for binary classification, determining whether a speaker's identity remains consistent or shifts within a given utterance. 
Reflecting real-world scenarios where speaker consistency is crucial, the dataset features two subtypes for each class: non-drift and hard negative samples for the ``no drift'' class, and abrupt drift and smooth morphing samples for the ``drift'' class. Each sample is created by concatenating consecutive speech segments with a 500 ms silence in between. The dataset contains 32 samples per subtype.

\vspace{+0.1cm}
\noindent{\textbf{2.2 Controlled Synthetic Construction}}

We begin with a curated set of \textup{N} distinct speaker samples. Let \( \mathcal{S} = \{ \mathbf{x}_1, \ldots, \mathbf{x}_{\textup{N}} \} \) denote the set of base speech clips, where each \( \mathbf{x}_i \) is associated with a unique speaker embedding. We synthesize utterances by concatenating three speech segments  $\mathbf{s}_1$, $\mathbf{s}_2$, and $\mathbf{s}_3$:

\begin{itemize}
    \item \textbf{Non-drift (label = 0)}: All segments are selected from the same speaker, e.g., \( [\mathbf{s}_1, \mathbf{s}_1, \mathbf{s}_1] \), producing consistent speaker identity throughout the utterance.
    \item \textbf{Abrupt drift (label = 1)}: At least one segment originates from a different speaker, e.g., \( [\mathbf{s}_1, \mathbf{s}_2, \mathbf{s}_2] \) or \( [\mathbf{s}_1, \mathbf{s}_1, \mathbf{s}_2] \), introducing a discrete speaker shift at a known boundary.
\end{itemize}

\noindent{\textbf{2.3 Hard Negative Construction (label = 0)}}

To evaluate the effectiveness of our proposed LLM-based automatic speaker drift detection framework, we construct hard negative samples, utterances from the same speaker recorded under different conditions that do not involve actual identity drift. These conditions include variations in speaking rate and pitch (e.g., speedup rate = 1.05) and the addition of background noise (e.g., noise level = –30 dB). While these augmentations introduce noticeable acoustic and prosodic changes, they maintain the speaker’s identity. As such, they serve as challenging counterexamples, allowing us to test whether the framework can reliably distinguish true speaker drift from superficial style or environmental variations.

\vspace{+0.1cm}
\noindent{\textbf{2.4 Modeling Gradual or Subtle Drift (label = 1)}}

To simulate gradual speaker identity transitions in a more realistic setting, we introduce smooth speaker morphing, where speaker drift is performed directly at the audio level. Instead of generating discrete segments from different speakers, we synthesize overlapping speech regions and apply time-domain blending to produce perceptually smooth identity changes.
Let \( \mathbf{x}_A(t) \) and \( \mathbf{x}_B(t) \) denote two different speech waveforms. A morphing region \( t \in [T_1, T_2] \subset [0, T] \) is defined over which the audio is blended using a linear cross-fade:
$\mathbf{x}_{\text{morph}}(t) = (1 - \alpha(t)) \cdot \mathbf{x}_A(t) + \alpha(t) \cdot \mathbf{x}_B(t)$, where $\alpha(t) = \frac{t - T_1}{T_2 - T_1}$.
\noindent Outside the morphing region, the waveform is taken entirely from one speaker:
\[
\mathbf{x}(t) =
\begin{cases}
\mathbf{x}_A(t), & t < T_1 ,\\
\mathbf{x}_{\text{morph}}(t), & T_1 \leq t \leq T_2 ,\\
\mathbf{x}_B(t), & t > T_2.
\end{cases}
\]
\noindent This results in a continuous utterance where the speaker identity shifts gradually from \( A \) to \( B \) over a defined time window (e.g., from 3s to 6s in Figure \ref{fig:speaker_morphing}), mimicking realistic speaker drift at the acoustic level.
Compared to hard cuts, this approach produces more subtle transitions and challenges models to detect non-abrupt identity shifts that lack clear boundaries. It serves as an essential component for evaluating model sensitivity to fine-grained speaker variation.

Although synthetic, our proposed dataset exhibits natural speech-like quality and variability, closely resembling real human voices. Each sample has been manually inspected to ensure clarity, coherence, and intended speaker characteristics, guaranteeing high perceptual quality. The dataset allows precise control over the timing and location of speaker drift within utterances, enabling fine-grained analysis. It is label-balanced, containing equal numbers of drifted and non-drifted samples to avoid classification bias. Furthermore, utterances are structured in fixed-length segments, facilitating segment-level inspection and simplifying downstream detection.

\begin{figure}[t!]
    \centering
    \includegraphics[width=1.0\linewidth]{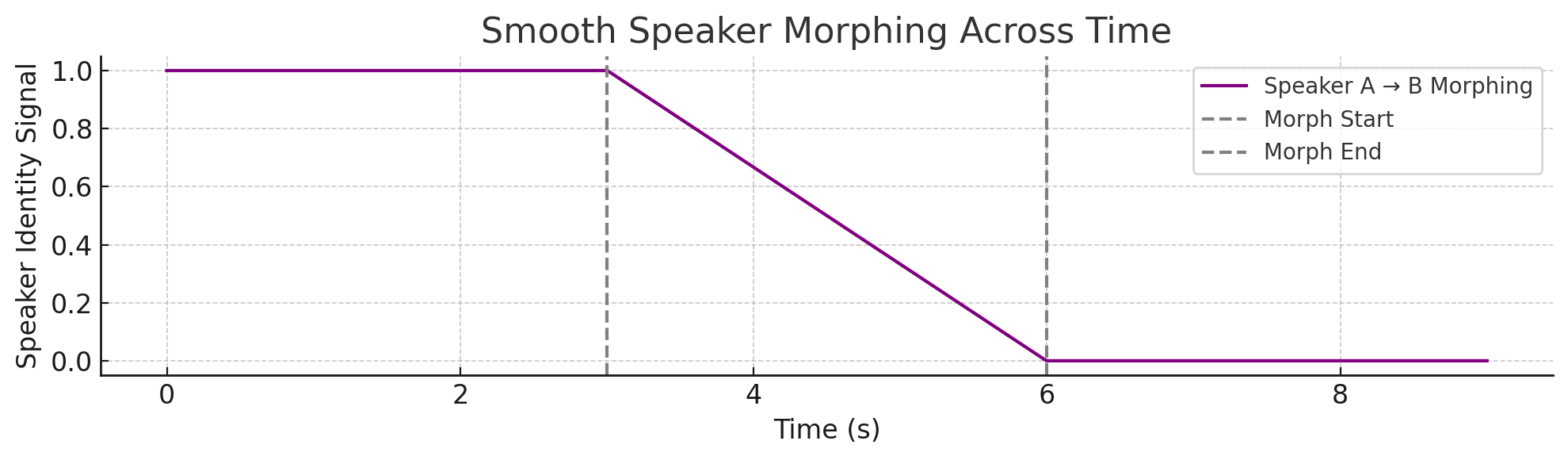}
    \vspace{-0.9cm}\caption{Smooth speaker morphing across an utterance. The morphing region (3s–6s) involves audio-level cross-fading from Speaker A to Speaker B.}
    \label{fig:speaker_morphing}
\end{figure}

\section{\textbf{Method}}

\vspace{+0.1 cm}
\noindent{\textbf{3.1 Problem Formulation}}

Given a speech utterance \( \mathbf{x}(t) \), divided into three contiguous segments \( (s_1, s_2, s_3) \) of equal duration, the goal is to detect whether the speaker identity remains consistent throughout or experiences a drift at one or more boundaries.
We formalize the task as a binary classification problem. Let the label \( y \in \{0, 1\} \), where \( y = 0 \) indicates a consistent speaker and \( y = 1 \) indicates at least one speaker shift. Our proposed method aims to predict \( y \) from the acoustic features or embeddings derived from \( \mathbf{x}(t) \).

\noindent{\textbf{3.2 Speaker Embedding Extraction and Cosine Similarity}}

To enable automatic detection of speaker drift, each utterance is divided into three consecutive audio segments $(\mathbf{s}_1, \mathbf{s}_2, \mathbf{s}_3)$, from which we extract fixed-dimensional speaker embeddings $\mathbf{e}_i = f_{\text{embed}}(\mathbf{s}_i) \in \mathbb{R}^d$ using a pre-trained model (e.g., Wav2Vec2). We then compute cosine similarities between adjacent segments to quantify inter-segment identity consistency: $\text{sim}_{i,j} = \cos(\mathbf{e}_i, \mathbf{e}_j) = \frac{\mathbf{e}_i^\top \mathbf{e}_j}{\|\mathbf{e}_i\| \cdot \|\mathbf{e}_j\|}$, where $ (i,j) \in \{(1,2), (2,3)\}.$
The resulting similarity pair $(\text{sim}_{1,2}, \text{sim}_{2,3})$ serves as a compact representation of speaker consistency across the utterance, where lower values indicate potential identity shifts and provide an interpretable signal for LLM-based inference.

\vspace{+0.1 cm}
\noindent{\textbf{3.3 Prompt Design}}

We design a standardized prompt template to ensure consistent and interpretable LLM-based evaluation of speaker drift. Each prompt concisely conveys the task objective, detecting speaker identity shifts, alongside the relevant pairwise cosine similarity scores. By emphasizing clarity and minimizing ambiguity, the prompt guides the LLM toward accurate and reproducible decisions. The structure is as follows: \\
\textbf{Instruction template:}
\begin{quote}
You are given a list of $N$ pairwise similarity scores derived from three consecutive segments of each utterance. Each score reflects the similarity between adjacent segments, computed using \{\texttt{your similarity metric}\}, with embeddings obtained from \{\texttt{your audio encoder}\}.
Your task is to assess whether each utterance exhibits speaker drift based on the provided similarity scores. For instance, consider the example: \texttt{(0.9963, 0.9872)}. Based on this, determine whether a speaker identity shift has likely occurred. Provide a binary decision, \texttt{same (no drift)} or \texttt{different (drift)}, and briefly explain your reasoning.
Evaluate each case independently and report your decisions with corresponding justifications.
\end{quote}

\noindent{\textbf{3.4 Theoretical Justification for Drift Detection via Embedding Similarity}}

Detecting speaker drift hinges on a key geometric intuition: embeddings from the same speaker tend to cluster together, while those from different speakers exhibit separation, particularly when measured via cosine similarity. This section formalizes that intuition through two complementary results: one intuitive and qualitative, and one rigorous and quantitative.

\vspace{+0.1 cm}
\noindent{\textbf{3.4.1 Motivating Insight of Drift Separability under Embedding Smoothness}}

We begin with an intuitive proposition that connects smooth embedding behavior with the ability to detect drift using cosine similarity.

\begin{proposition}
[\textbf{Embedding Smoothness \& Drift Separability}]
Let $\mathbf{e}_1, \mathbf{e}_2, \mathbf{e}_3 \in \mathbb{R}^d$ be embeddings from three contiguous speech segments. Define pairwise cosine similarities $\text{sim}_{i,j} = \cos(\mathbf{e}_i, \mathbf{e}_j)$. Suppose:
$\max(\text{sim}_{1,2}, \text{sim}_{2,3}) < \tau$ for some threshold $\tau \in (0,1)$. Then any classifier $f$ that is Lipschitz-continuous over the similarity space can separate drift from non-drift samples with bounded error:
$\mathbb{P}\left(f(\text{sim}_{1,2}, \text{sim}_{2,3}) \neq y\right) \leq \epsilon(\tau)$, where $\epsilon(\tau) \to 0$ as $\tau \to 0$, assuming embeddings vary smoothly for the same speaker and show sufficient separation across speakers.
\end{proposition}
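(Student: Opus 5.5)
The plan is to make the two qualitative hypotheses precise, reduce the error of a threshold rule to two tail probabilities, and then check that a Lipschitz classifier can realize that rule. The statement is loose as written: $\tau$ enters both as an observed bound on the similarities and as the margin parameter. So I would first fix a formal reading and prove the result under it.

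\emph{Smoothness} for the same speaker ($y=0$) will mean that the adjacent-pair similarities concentrate near $1$. Concretely, for some $\delta\ge 0$,
\[
\mathbb{P}\bigl(\min(\text{sim}_{1,2},\text{sim}_{2,3}) < 1-\delta \mid y=0\bigr) \le \eta_0(\delta).
\]
\emph{Separation} across speakers ($y=1$) will mean that at the drift boundary the similarity is small:
\[
\mathbb{P}\bigl(\min(\text{sim}_{1,2},\text{sim}_{2,3}) \ge \tau \mid y=1\bigr) \le \eta_1(\tau),
\]
with $\eta_1(\tau)\to 0$ as $\tau\to 0$. Under this reading, the displayed event $\max(\text{sim}_{1,2},\text{sim}_{2,3})<\tau$ is the regime in which drift samples are found, and the proposition asserts a vanishing misclassification rate as the margin $1-\delta-\tau$ opens.

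\textbf{Step 1 (construct the classifier).} I would pick $f(u,v)=\mathbf{1}[\min(u,v)<\tau']$ with $\tau'\in(\tau,1-\delta)$. Because of the indicator, this $f$ is not Lipschitz. I would therefore replace it by a piecewise-linear ramp $g$ that equals $1$ for $\min(u,v)\le\tau$, equals $0$ for $\min(u,v)\ge 1-\delta$, and interpolates linearly in between, followed by thresholding at $1/2$. This $g$ is Lipschitz with constant $1/(1-\delta-\tau)$, since $\min$ is $1$-Lipschitz in the $\ell_\infty$ norm.

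\textbf{Step 2 (bound the error by a union bound).} Write
\[
\mathbb{P}(f\ne y)=\pi_0\,\mathbb{P}(f=1\mid y=0)+\pi_1\,\mathbb{P}(f=0\mid y=1)\le \pi_0\,\eta_0(\delta)+\pi_1\,\eta_1(\tau)=:\epsilon(\tau).
\]
Here $\pi_0,\pi_1$ are the class priors, both equal to $1/2$ on the paper's label-balanced dataset. Letting $\tau\to 0$ kills the second term. For the first term to vanish as well, smoothness must hold with $\delta$ tending to $0$, or at least with $\eta_0$ negligible once a fixed margin is available. I would state that $\epsilon(\tau)\to 0$ when both tails vanish.

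\textbf{Step 3 (obstacle: "any" Lipschitz classifier).} The main difficulty is the word \emph{any}. A constant function is Lipschitz and cannot separate the classes, so the claim must be read as "there exists a Lipschitz classifier", or as "any Lipschitz $f$ that is consistent with the labels on the two cores". I would prove the existence version. I would then add that every $L$-Lipschitz $f$ fitting the cores $\{\min\ge 1-\delta\}\mapsto 0$ and $\{\min\le\tau\}\mapsto 1$ needs $L\ge 1/(1-\delta-\tau)$. Any such $f$ errs only off the cores, so its error is bounded by the same $\epsilon(\tau)$.

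Making this honest requires specifying the distributions, which is why the paper's subsequent quantitative result is presumably where exponential tails (e.g., sub-Gaussian concentration on the sphere) replace the abstract $\eta_0,\eta_1$.
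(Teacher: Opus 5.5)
The paper gives no proof of this proposition. It is offered only as qualitative motivation, and its rigorous counterpart is Theorem~1. That proof has the same skeleton as yours: a threshold on $m:=\min(\text{sim}_{1,2},\text{sim}_{2,3})$, a split by class, a union bound, and class-conditional tails. Your point that ``any'' must become ``there exists'' is right. The gap is the limit $\epsilon(\tau)\to 0$, which is the real content of the statement: your formalization cannot deliver it, because both limits run in the wrong direction.

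The drift tail $h(\tau)=\mathbb{P}(m\ge\tau\mid y=1)$ is non-increasing in $\tau$, so as $\tau\downarrow 0$ it \emph{grows} to $\mathbb{P}(m>0\mid y=1)$. For a fixed drift law, requiring $h(\tau)\le\eta_1(\tau)\to 0$ therefore forces $h\equiv 0$, i.e.\ $m\le 0$ almost surely on drift samples. That is degenerate, and it is false in the paper's experiments, where drift similarities are far above $0$. So ``letting $\tau\to 0$ kills the second term'' is backwards. If you meant a family of drift laws indexed by $\tau$, the term is simply assumed away.

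The first term fares no better. You bound it by $\pi_0\eta_0(\delta)$, which does not depend on $\tau$, so $\epsilon(\tau)\to\pi_0\eta_0(\delta)$. Sending $\delta\to 0$ also goes the wrong way: $\mathbb{P}(m<1-\delta\mid y=0)$ increases to $\mathbb{P}(m<1\mid y=0)$, which vanishes only if same-speaker similarities equal $1$ almost surely. Hence ``$\epsilon(\tau)\to 0$ when both tails vanish'' assumes the conclusion. Your ``margin $1-\delta-\tau$ opens'' picture is borrowed from Theorem~1, but there $\Delta$ grows because the class means $\mu_0,\mu'$ separate, not because the threshold moves.

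The missing idea is to treat $\tau$ as a property of the drift class and let the decision cut track it. Then the limit acts on the non-drift \emph{lower} tail, which has the right monotonicity. Read the displayed hypothesis as ``drift samples satisfy $\max(\text{sim}_{1,2},\text{sim}_{2,3})<\tau$'', so $m<\tau$, and classify by $\mathbf{1}[m<\tau]$ rather than by a cut near the fixed level $1-\delta$. Every drift sample is then classified correctly, and $\mathbb{P}(f\neq y)=\pi_0\,\mathbb{P}(m<\tau\mid y=0)$. This shrinks monotonically as $\tau\downarrow 0$ and, by continuity of measure, tends to $\pi_0\,\mathbb{P}(m\le 0\mid y=0)$. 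That limit is $0$ under the mild smoothness assumption that same-speaker adjacent similarities are positive almost surely. If the similarities are sub-Gaussian with variance proxy $\sigma^2$ and mean at least $\mu_0$, you also get the explicit bound $2\pi_0\exp\bigl(-(\mu_0-\tau)^2/(2\sigma^2)\bigr)$ for $\tau<\mu_0$.

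Finally, a $\{0,1\}$-valued continuous function on the connected square $[-1,1]^2$ is constant, so ``Lipschitz'' can only refer to a real-valued score. Your thresholded ramp is a step function, and the Lipschitz constant does no work in the bound. Indeed $g=\min\bigl(1,\max(0,\tfrac12-(m-\tau))\bigr)$ is $1$-Lipschitz for every $\tau$, and $\mathbf{1}[g>\tfrac12]=\mathbf{1}[m<\tau]$. You should say explicitly that this hypothesis is vacuous, rather than presenting it as part of the argument.
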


This proposition motivates the use of cosine similarity as a natural proxy for speaker identity consistency. It suggests that if embeddings change gradually within the same speaker and shift abruptly across speakers, then even simple decision boundaries (e.g., thresholding) can detect drift reliably.

\begin{table*}[t!]
\centering
\scalebox{0.6}{
\begin{tabular}{cccccccc}
\toprule
\textbf{Embedding Type} & \textbf{GPT-4o} \cite{openai2024gpt4technicalreport} & \textbf{Gemini-Pro-2.5} \cite{geminiteam2024geminifamilyhighlycapable} & \textbf{Claude-4} \cite{anthropic2025claude} & \textbf{Qwen-3} \cite{bai2023qwentechnicalreport} & \textbf{DeepSeek-R1} \cite{deepseekai2025deepseekr1incentivizingreasoningcapability} & \textbf{PCA-based Baseline} & \textbf{Fixed-threshold Baseline}\\
\midrule
Wav2Vec2 \cite{baevski2020wav2vec} & \textbf{90.70\%} & 82.90\% & 88.20\% & 72.70\% & 80.00\% & 71.30\% & 61.70\% \\
MFCC \cite{davis1980comparison}    & 39.00\% & 38.64\% & 76.00\% & 71.40\% & \textbf{80.00\%} & 65.60\% &  57.40\% \\
Whisper \cite{radford2023robust}   & 89.41\% & 80.00\% & 84.38\% & 72.70\% & \textbf{90.60\%} & 67.30\% &  61.30\% \\
\midrule
\textbf{Thresholds} & (0.960, 0.950, 0.995) & (0.970, 0.950, 0.998) & (0.950, 0.950, 0.995) & (0.910, 0.950, 0.990) & (0.970, 0.990, 0.995) & (0.950, 0.950, 0.950) & (0.900, 0.900, 0.900) \\
\bottomrule
\end{tabular}}
\vspace{-0.2cm}
\caption{Performance comparison with baselines and ablation studies using different audio embedding methods, evaluated using F1 score and pairwise cosine similarity.}
\label{tab:model_comparison_drift}
\end{table*}

\vspace{+0.1 cm}
\noindent{\textbf{3.4.2 Core Theoretical Result: Error Bound under Distributional Assumptions}}

We now present a formal result under distributional assumptions on cosine similarity scores for drift vs. non-drift segments.
Before stating the theorem, note that we assume the embeddings lie on the unit sphere $\mathbb{S}^{d-1} \subset \mathbb{R}^d$, where $\mathbb{S}^{d-1} = \{ \mathbf{x} \in \mathbb{R}^d : \|\mathbf{x}\| = 1 \}$. This reflects a common normalization step in modern speaker embedding systems (e.g., x-vectors, Wav2Vec2), where embeddings are constrained to have unit norm to make cosine similarity equivalent to the dot product. Since the unit sphere is a $(d-1)$-dimensional manifold, each embedding lies on a curved surface, not a flat space, with exactly one degree of freedom removed due to the unit-norm constraint.

\begin{theorem}
[\textbf{Embedding Separation Bound for Speaker Drift Detection}]
Let $\mathbf{e}_1, \mathbf{e}_2, \mathbf{e}_3 \in \mathbb{S}^{d-1}$ denote unit-norm embeddings corresponding to three contiguous speech segments. We assume that for non-drift (i.e., same-speaker) samples, the expected pairwise cosine similarity satisfies $\mathbb{E}[\cos(\mathbf{e}_i, \mathbf{e}_j)] \geq \mu_0$ with variance bounded by $\sigma^2$. In contrast, for drift (i.e., different-speaker) samples, the expected similarity is lower, satisfying $\mathbb{E}[\cos(\mathbf{e}_i, \mathbf{e}_j)] \leq \mu' < \mu_0$, with the same variance $\sigma^2$.
Define the classifier:
\[
f(\mathbf{e}_1, \mathbf{e}_2, \mathbf{e}_3) =
\begin{cases}
 1 & \text{if } \min\left(\cos(\mathbf{e}_1, \mathbf{e}_2), \cos(\mathbf{e}_2, \mathbf{e}_3)\right) < \tau, \\
 0 & \text{otherwise,}
 \end{cases}
\]
for any threshold $\tau \in (\mu', \mu_0)$. Then, the misclassification probability is bounded by:
\begin{align*}
\mathbb{P}\left(f(\mathbf{e}_1, \mathbf{e}_2, \mathbf{e}_3) \neq y\right)
&\leq 4 \exp\left(-\frac{\Delta^2}{2\sigma^2}\right),
\quad \text{where} \\ \Delta &= \min(\mu_0 - \tau, \tau - \mu').
\end{align*}
\end{theorem}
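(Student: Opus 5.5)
The plan is to reduce misclassification to a few one-sided tail events for individual cosine similarities, bound each tail with a sub-Gaussian concentration inequality, and combine them with a union bound. One point needs fixing first. A bare variance bound $\sigma^2$ only yields Chebyshev-type tails of order $\sigma^2/\Delta^2$, not the exponential form $\exp(-\Delta^2/2\sigma^2)$. I will therefore read the hypothesis ``variance bounded by $\sigma^2$'' as the stronger assumption that each centered similarity $\cos(\mathbf{e}_i,\mathbf{e}_j)-\mathbb{E}[\cos(\mathbf{e}_i,\mathbf{e}_j)]$ is sub-Gaussian with variance proxy $\sigma^2$. The tail bound in the statement is exactly the Chernoff/Hoeffding bound for such variables, so this is the reading the theorem needs. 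This assumption is also natural here, because on $\mathbb{S}^{d-1}$ the similarities are bounded in $[-1,1]$.

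\textbf{Non-drift case ($y=0$).} The classifier errs only if $\min(\cos(\mathbf{e}_1,\mathbf{e}_2),\cos(\mathbf{e}_2,\mathbf{e}_3))<\tau$. This means at least one adjacent pair falls below $\tau$. For each pair, $\mathbb{E}[\cos]\ge\mu_0>\tau$, so the event forces a downward deviation of at least $\mu_0-\tau\ge\Delta$. The sub-Gaussian lower tail then gives probability at most $\exp(-\Delta^2/2\sigma^2)$ for each pair. A union bound over the two pairs gives
\[
\mathbb{P}(f\neq y\mid y=0)\le 2\exp\left(-\frac{\Delta^2}{2\sigma^2}\right).
\]

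\textbf{Drift case ($y=1$).} The classifier errs only if both adjacent similarities are at least $\tau$. The subtlety is that in samples like $[\mathbf{s}_1,\mathbf{s}_1,\mathbf{s}_2]$ only one adjacent pair actually straddles the speaker change. So I interpret the drift assumption as applying to the boundary-crossing pair(s): at least one adjacent pair $(i,j)$ has $\mathbb{E}[\cos(\mathbf{e}_i,\mathbf{e}_j)]\le\mu'$. The error event is contained in the event that this particular pair satisfies $\cos(\mathbf{e}_i,\mathbf{e}_j)\ge\tau$. That requires an upward deviation of at least $\tau-\mu'\ge\Delta$, whose probability is at most $\exp(-\Delta^2/2\sigma^2)$. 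Alternatively, I can union-bound over both pairs whenever both carry the drift hypothesis, which gives at most $2\exp(-\Delta^2/2\sigma^2)$.

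\textbf{Combining the cases.} By the law of total probability,
\[
\mathbb{P}(f\neq y)=\pi_0\,\mathbb{P}(f\neq y\mid y=0)+\pi_1\,\mathbb{P}(f\neq y\mid y=1)\le 2\exp\left(-\frac{\Delta^2}{2\sigma^2}\right),
\]
where $\pi_0,\pi_1$ are the class priors. This is already below the stated bound, so the theorem's constant $4$ follows a fortiori, for example by summing the two conditional bounds without weighting by priors. Using $\Delta=\min(\mu_0-\tau,\tau-\mu')$ lets a single exponent cover both directions.

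\textbf{Main obstacle.} The difficulty is not the computation but the hypotheses. The exponential tail requires the sub-Gaussian reading of $\sigma^2$, since Chebyshev alone would only give $\sigma^2/\Delta^2$. The drift assumption must be attached to the boundary-crossing pair rather than to every pair. I would state both interpretations explicitly at the start of the proof. No independence between the two similarities is needed, because only union bounds and single-pair containments are used.
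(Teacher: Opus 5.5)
Your proof is correct and is essentially the paper's proof: a union bound over the two adjacent pairs for false positives, a one-sided tail bound for false negatives, and then adding the two conditional bounds. Your refinements are sound: the sub-Gaussian reading of $\sigma^2$ is what the paper's appeal to ``Hoeffding's inequality for bounded variables'' actually requires (boundedness in $[-1,1]$ only yields variance proxy $1$, not $\sigma^2$), attaching the drift hypothesis only to the boundary-crossing pair is weaker than the paper's assumption on both pairs, and weighting by class priors improves the constant from $4$ to $2$.
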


This result provides a concrete error bound, showing that the classifier’s performance improves exponentially as the separation margin $\Delta$ grows. It justifies cosine thresholding as a statistically grounded and computationally simple strategy for detecting speaker drift, provided embeddings are well-separated across speakers and stable within speakers.

\begin{proof}[\textless \textbf{Proof}\textgreater]
We aim to bound the misclassification probability of the classifier
$f(\mathbf{e}_1, \mathbf{e}_2, \mathbf{e}_3)$.
Let $y = 1$ denote a drift case (i.e., a speaker change occurs), and $y = 0$ denote a non-drift case (i.e., all segments are from the same speaker).
Let $\text{sim}_{i,j} := \cos(\mathbf{e}_i, \mathbf{e}_j)$ denote the cosine similarity between embeddings $\mathbf{e}_i$ and $\mathbf{e}_j$. In the non-drift case, we assume $\mathbb{E}[\text{sim}_{1,2}] \geq \mu_0$ and $\mathbb{E}[\text{sim}_{2,3}] \geq \mu_0$, with variance $\text{Var}(\text{sim}_{i,j}) \leq \sigma^2$. In the drift case, the expected similarities are lower: $\mathbb{E}[\text{sim}_{1,2}] \leq \mu'$, $\mathbb{E}[\text{sim}_{2,3}] \leq \mu' < \mu_0$, with the same variance bound $\sigma^2$. Let the classifier threshold $\tau$ satisfy $\mu' < \tau < \mu_0$, and define the decision margin as $\Delta = \min(\mu_0 - \tau, \tau - \mu')$.

\noindent\textbf{Step 1: Bounding False Positives (Type I Error)}

Suppose the input is non-drift. A false positive occurs when:
$
f(\mathbf{e}_1, \mathbf{e}_2, \mathbf{e}_3) = 1 \quad \text{(i.e.,} \min(\text{sim}_{1,2}, \text{sim}_{2,3}) < \tau \text{)}.
$
This implies at least one of the similarities is below $\tau$, so:
$
\mathbb{P}(f \neq y \mid y = 0) \leq \mathbb{P}(\text{sim}_{1,2} < \tau) + \mathbb{P}(\text{sim}_{2,3} < \tau).
$

\noindent Since $\mathbb{E}[\text{sim}_{i,j}] \geq \mu_0$, and $\tau < \mu_0$, we apply Hoeffding’s inequality (for bounded variables, e.g., cosine similarity in $[-1, 1]$):
$
\mathbb{P}(\text{sim}_{i,j} < \tau) \leq \exp\left(-\frac{(\mu_0 - \tau)^2}{2\sigma^2}\right).
$
Thus,
$
\mathbb{P}(f \neq y \mid y = 0) \leq 2 \exp\left(-\frac{(\mu_0 - \tau)^2}{2\sigma^2}\right).
$

\noindent\textbf{Step 2: Bounding False Negatives (Type II Error)}

Now suppose the input is drift. A false negative occurs when:
$
f(\mathbf{e}_1, \mathbf{e}_2, \mathbf{e}_3) = 0 \quad \text{(i.e.,} \min(\text{sim}_{1,2}, \text{sim}_{2,3}) \geq \tau \text{)}.
$
That is, both similarities are above $\tau$:
$
\mathbb{P}(f \neq y \mid y = 1) \leq \mathbb{P}(\text{sim}_{1,2} \geq \tau) + \mathbb{P}(\text{sim}_{2,3} \geq \tau).
$
Since $\mathbb{E}[\text{sim}_{i,j}] \leq \mu'$, and $\tau > \mu'$, again apply Hoeffding’s inequality:
$
\mathbb{P}(\text{sim}_{i,j} \geq \tau) \leq \exp\left(-\frac{(\tau - \mu')^2}{2\sigma^2}\right).
$
So,
$
\mathbb{P}(f \neq y \mid y = 1) \leq 2 \exp\left(-\frac{(\tau - \mu')^2}{2\sigma^2}\right).
$

\noindent\textbf{Final Bound: Total Classification Error}
 
Combining both cases:
$
\mathbb{P}(f \neq y) \leq 2 \exp\left(-\frac{(\mu_0 - \tau)^2}{2\sigma^2}\right) + 2 \exp\left(-\frac{(\tau - \mu')^2}{2\sigma^2}\right).
$
By definition of $\Delta = \min(\mu_0 - \tau, \tau - \mu')$, we have:
$
\mathbb{P}(f \neq y) \leq 4 \exp\left(-\frac{\Delta^2}{2\sigma^2}\right). 
$
\end{proof}

\section{\textbf{Experiments}}

\noindent{\textbf{4.1 Experimental Setup}}

\noindent\textbf{Dataset and Evaluation.}
We use the dataset described in Section III, consisting of 128 samples, 64 with speaker drift and 64 without, synthesized from 384 high-quality utterances by different speakers and verified by human annotators. Each sample is 9 to 40 seconds long. For evaluation, we report accuracy and F1 score. Each sample is represented by (1) cosine similarity scores between adjacent segments and (2) principal component analysis (PCA)-reduced speaker embeddings, where each segment is compressed to 8 or 16 dimensions, yielding 24- or 48-dimensional inputs. LLMs, including GPT-4o \cite{openai2024gpt4technicalreport}, Gemini-Pro 2.5 \cite{geminiteam2024geminifamilyhighlycapable}, Claude-4 \cite{anthropic2025claude}, DeepSeek-R1 \cite{deepseekai2025deepseekr1incentivizingreasoningcapability}, and Qwen-3 \cite{bai2023qwentechnicalreport}, are tested under both zero-shot and few-shot settings on the full dataset.

\noindent\textbf{Baselines.}
For the fixed-threshold baseline, we classify a sample as drift if either $\cos(\mathbf{s}_1, \mathbf{s}_2)$ or $\cos(\mathbf{s}_2, \mathbf{s}_3)$ falls below 0.90; otherwise, it is labeled non-drift. This threshold, selected from the empirical distribution of minimum similarity scores, balances overlap between classes; non-drift samples occasionally drop to 0.77, while drift samples can reach as low as 0.80. Despite this compromise, the method achieves a modest F1 score of 0.62, highlighting the limitations of such simple heuristics. To build a strong baseline, we project segment-level speaker embeddings into a lower-dimensional space via PCA to preserve broader variation patterns. These reduced embeddings are then used as input to LLMs for binary classification.

\vspace{+0.1 cm}
\noindent{\textbf{4.2 Experimental Results}}

\noindent\textbf{Performance Analysis.}
We evaluate our LLM-based method against the fixed-threshold and PCA-based baselines described above. As reported in Table \ref{tab:model_comparison_drift}, our approach yields a substantially higher F1 score than the fixed-threshold baseline and consistently outperforms the PCA-based baseline, indicating superior performance in detecting speaker drift. These results validate the effectiveness of leveraging LLMs with pairwise cosine similarity scores for the speaker drift detection task.

\noindent\textbf{Ablation Studies.}
We conduct several ablation studies to assess the impact of different design choices:
(1) \textit{Audio Embedding Type:} We compare Wav2Vec2, MFCC, and Whisper embeddings as input features. As shown in Table \ref{tab:model_comparison_drift}, Wav2Vec2 embeddings achieve the highest performance, indicating their superior ability to capture speaker-relevant characteristics.
(2) \textit{PCA Dimensionality:} We assess the effect of dimensionality reduction by compressing embeddings to 8 dimensions (preserving $\approx 75$\% of the variance) and 16 dimensions ($\approx 87$\% variance preserved) per segment. Results in Table \ref{tab:drift-results} show that the 16-dimensional setting yields better performance, suggesting that retaining more dimensions helps preserve discriminative speaker information.
(3) \textit{Input Representation:} We compare models using pairwise cosine similarity scores against those using PCA-reduced embeddings as input. As shown in Table \ref{tab:drift-results}, the former consistently outperforms the latter, indicating that explicit relational features between segments more effectively capture speaker identity drift than compressed embedding representations.

\begin{table}[t!]
\centering
\scalebox{0.65}{
\begin{tabular}{c|c|c|c}
\toprule
\textbf{Method} & \textbf{Input Type} & \textbf{Accuracy} & \textbf{F1 Score} \\
\midrule

GPT-4o         & PCA Embeddings (8) & 50.3\% & 66.7\% \\
GPT-4o         & PCA Embeddings (16) & 73.4\% & 73.6\% \\
GPT-4o         & \textbf{Cosine Scores}  & \textbf{89.5\%} & \textbf{90.7\%} \\
\midrule
Gemini-Pro-2.5 & PCA Embeddings (8)  & 50.8\% & 58.3\% \\
Gemini-Pro-2.5 & PCA Embeddings (16)  & 52.3\% & 59.8\% \\
Gemini-Pro-2.5 & \textbf{Cosine Scores}  & \textbf{79.7\%} & \textbf{82.9\%} \\
\midrule
Claude-4   & PCA Embeddings (8)  & 63.7\% & 69.3\% \\
Claude-4   & PCA Embeddings (16)  & 67.5\% & 73.6\% \\
Claude-4   & \textbf{Cosine Scores}  & \textbf{83.4\%} & \textbf{88.2\%} \\
\midrule
Qwen-3       & PCA Embeddings (8) & 65.3\% & 68.0\% \\
Qwen-3       & PCA Embeddings (16) & 69.1\% & 72.4\% \\
Qwen-3       & \textbf{Cosine Scores} & \textbf{69.5\%} & \textbf{72.7\%} \\
\midrule
DeepSeek-R1    & PCA Embeddings (8) & 60.9\% & 71.2\% \\
DeepSeek-R1    & PCA Embeddings (16) & 63.4\% & 74.4\% \\
DeepSeek-R1    & \textbf{Cosine Scores}  & \textbf{78.8\%} & \textbf{80.0\%} \\
\bottomrule
\end{tabular}}
\vspace{-0.2 cm}
\caption{Ablation study of the proposed LLM-driven speaker drift detection framework, evaluating the impact of different LLM backbones and input feature formats.}
\label{tab:drift-results}
\end{table}

\section{\textbf{Conclusion and Future Work}}
In this work, we introduced the first automated framework for detecting speaker drift in diffusion-based TTS, leveraging cosine similarity as a theoretically grounded proxy for vocal identity consistency and prompting LLMs for perceptual reasoning. Our method bridges low-level acoustic embeddings with high-level evaluation and is supported by a new benchmark dataset with human-verified annotations. Looking forward, we plan to extend this framework to multilingual and cross-lingual settings and explore fine-tuning LLMs for even greater sensitivity to subtle prosodic and identity cues in generated speech.

\begin{spacing}{0.6}
\bibliographystyle{IEEEbib}
\bibliography{strings,refs}
\end{spacing}

\end{document}